\newtheorem{thrm}{Theorem}[section]
\newtheorem{defn}[thrm]{Definition}
\newtheorem{asmp}[thrm]{Assumption}
\newtheorem{remark}[thrm]{Remark}
\newtheorem{prop}[thrm]{Proposition}
\newcommand{\cred}{\color{red}}
\newcommand{\re}{\mathbb{R}}
\newcommand{\R}{\mathcal{R}}
\newcommand{\A}{\mathcal{A}}
\newcommand{\p}{\mathbb{P}}
\newcommand{\Q}{\mathbb{Q}}
\newcommand{\e}{\mathbb{E}}
\newcommand{\F}{\mathcal{F}}
\newcommand{\td}{\mathrm{d}}
\newcommand{\cer}{\mathrm{CER}}
\newcommand{\tr}{'}
\DeclareMathOperator{\vax}{\xrightarrow[x\to\infty]{}}
\newcommand{\ovl}{\overline}
\title{Optimal portfolios of a long--term investor with floor or drawdown constraints}
\author{Vladimir Cherny\thanks{Mathematical Institute and the Oxford-Man Institute of Quantitative Finance, University of Oxford, OX1 3LB, UK. E-mail address: \texttt{Vladimir.Cherny@maths.ox.ac.uk}. Research supported by 
the Clarendon scholarship at the University of Oxford.} \and Jan Ob\l\'oj\thanks{Mathematical Institute, St John's College and the Oxford-Man Institute of Quantitative Finance, University of Oxford, OX1 3LB, UK. E-mail address: \texttt{jan.obloj@maths.ox.ac.uk}}\and
\smallskip\\
\multicolumn{1}{p{.7\textwidth}}{\centering\emph{University of Oxford}}}
\date{May 2013}
\begin{document}

\maketitle

\begin{abstract}
We study the portfolio selection problem of a long-run investor who is maximising the asymptotic growth rate of her expected utility. We show that, somewhat surprisingly, it is essentially not affected by introduction of a floor constraint which requires the wealth process to dominate a given benchmark at all times. We further study the notion of long-run optimality of wealth processes via convergence of finite horizon value functions to the asymptotic optimal value. We characterise long-run optimality under floor and drawdown constraints.
\end{abstract}

\section{Introduction}

This paper considers a dynamic asset allocation problem of a risk-sensitive investor focusing on problems related to long-run optimality and presence of pathwise constraints on investor's wealth process. More specifically, we consider drawdown constraints and floor constraints and are interested in long-run investor choosing her investment strategy $V$ according to
\begin{equation}\label{eq:cer_def}
\sup_{V }\R_U(V), \quad \text{ where } \R_{U}(V):=\limsup_{T\to \infty} \frac{1}{T}\log \e\left[U\left(V_T\right)\right].
\end{equation}
The idea to look at the maximisation of the growth rate of expected utility $\R_U(V)$ goes back to Dumas and Luciano \cite{DumasLuciano:91}, Grossman and Vila \cite{GrossmanVila:92} and Grossman and Zhou \cite{GZ}.
The criterion is designed to capture the long-horizon optimality and is often more tractable
 than the fixed-horizon utility maximisation of terminal wealth, cf.~Guasoni and Robertson \cite{GuasoniRobertson:11}.
The optimal rate is called the \emph{Certainty Equivalent Rate}\footnote{Also \emph{Equivalent Safe Rate}, see Guasoni and Ob\l\'oj \cite{GuasoniObloj}.} (CER) and has the interpretation of a critical incentive rate -- if the investor was offered such (or higher) rate of growth via other investment opportunities she would be happy to abandon the market and move to the alternative investment opportunities.
The above criterion has also natural links with the \emph{risk-sensitive control}, see e.g.\ Bielecki and Pliska \cite{BieleckiPliska:99} or Fleming and Sheu \cite{FleSheu}. Indeed, considering the power utility $U(x)=x^p/p$, $F_T=\log V_T$ and expanding around $p=0$ we obtain
$$\frac{1}{p}\log \e\left[\mathrm{e}^{pF_T}\right] = \e\left[ F_T\right] + \frac{p}{2} \textrm{Var}(F_T) + O(p^2),$$
where Var denotes the variance and we expanded first $\exp(\cdot)$ and then $\log(1+\cdot)$. Of special interest is the case $p<0$ which captures the tradeoff between maximisation of the growth rate of returns $F$ and controlling their variance.

The first problem considered in this paper corresponds to \eqref{eq:cer_def} with $V$ spanning admissible wealth processes which further satisfy $V_t\geq G_t$, $t\geq 0$, where $G$ is a given floor process. Such a floor constraint is also referred to in the literature as an American capital guarantee, see El Karoui and Meziou \cite{KaMe,KaMeMax}. It is motivated by different insurance products available in a real-world financial markets which  guarantee a pre-specified minimum wealth for the investor's portfolio. In particular, the popular CPPI (Constant Proportion Portfolio Insurance) strategies guarantee that discounted value of the wealth dominates a pre-specified floor, see Black and Jones \cite{BlJo} and Perold and Sharpe \cite{PeSh}.

Our main result states that for any reasonable floor process there exists a wealth which is optimal for the unconstrained problem and dominates any given fraction of this floor process. In consequence, the floor constraint does not affect the optimal value of the long-term optimisation problem. This underlines the simplifying nature of growth rate maximisation criterion \eqref{eq:cer_def}. It clearly does not distinguish between two wealth processes which agree from some point in time onwards. As it turns out, it also washes away more subtle features of finite horizon problems. Our results are obtained in a general semimartingale setup and extend\footnote{This research was conducted as part of the first author's Ph.D.\ thesis and was in fact done independently of \cite{sek}.} Sekine \cite{sek} who dealt with the power utility function and considered diffusion driven market. We note also that our results do not contradict Davis and Lleo \cite{DavisLleo:08} who applied risk-sensitive control tools but with $F_T=\log (V_T/G_T)$ and hence had a different objective.

The second question we consider in this paper relates to the asymptotic optimality of solutions to finite horizon utility maximisation problems as defined in \eqref{eq:cer_def}.
The long-term behaviour of an investor whose preferences are characterised via a utility function was studied in many influential papers. In particular, the so called \emph{turnpike theorems} establish
necessary and sufficient conditions for the optimal strategy for finite horizon to converge as the horizon tends to infinity, see \cite{Mos,Lel,Hak} for discrete market setting  and \cite{CoxHu,HuZa,DyRoBa,GuRo} for continuous market setting.
More recently, Guasoni and Robertson \cite{GuasoniRobertson:11} described the concept of long-run optimality of the wealth process. The wealth process $V$ is called \emph{long-run optimal} if the rate of growth of value functions for finite horizon problems converges to the CER, i.e.\ the optimal asymptotic growth rate of expected utility of $V$ in \eqref{eq:cer_def}.
We consider this notion in the framework of constrained optimisation. Namely, we provide conditions for the wealth which solves constrained long-term optimisation problem to be long-run optimal.

The paper is organised as follows. Below, we introduce a general market setup used throughout. In Section \ref{sec:longterm} we solve the long-term investor problem under floor constraint. And in Section \ref{sec:longrun} the long-run property is studied in the constrained framework.

\subsection{Market setup}\label{sec:market}

We consider a frictionless market defined on the filtered probability space $(\Omega, \mathcal{F},$ $ (\mathcal{F}_t), \mathbb{P})$ satisfying the usual hypothesis. All considered processes are implicitly taken right-continuous with left-limits (c\`adl\`ag).
The underlying assets are represented as a vector of strictly positive semimartingales $\tilde{S}_t = (\tilde{S}^0_t,.., \tilde{S}^n_t)$. We fix $\tilde{S}^0_t=N_t$ to be the baseline asset or our numeraire. We stress that here $N$ is arbitrary and could be the riskless asset, a stock, but also a portfolio process. The underlying assets in units of $N$ are expressed as $S:= (1, S^1,..,S^d)$ with $S^i_t := \tilde{S}^i_t/N_t$. All assets and portoflios will be expressed in units of $N$.

The investor is assumed to trade in the usual self-financing way. The set of all admissible investment strategies is given as :
\begin{defn} \label{def:wealth1}
An adapted semimartingale $(V_t)$ is called a \emph{wealth process} if it is strictly positive and there exists an $(\F_t)$--predictable process $\pi=(\pi^1_t,\ldots,\pi^d_t)$ such that $V_t=V_0+\sum_{i=1}^d \int_0^t \pi^i_u \td S^i_u$, where the (vector) integral is assumed to be well-defined. The set of wealth processes with $V_0=v_0$ is denoted $\A(v_0)$.
\end{defn}

\subsection{Az\'ema--Yor processes} \label{sec:azema-yor}
We recall briefly Az\'ema--Yor processes which will be of use later in the paper. They were studied recently by Carraro, El Karoui and Ob\l\'oj \cite{CEO} and we refer the reader to \cite{CEO} for further details.

\begin{prop}[Carraro, El Karoui and Ob\l\'oj \cite{CEO}]\label{prop:AY}
Let $F'$ be a locally bounded function, $F(x)=F(x_0)+\int_{x_0}^x F'(u)\td u$, and $(X_t)$ a max-continuous $(\F_t)$--semimartingale. The associated Az\'ema--Yor process is given via
\begin{equation}\label{eq:AYdef}
M^F_t(X):=F(\ovl X_t)-F'(\ovl X_t)(\ovl X_t - X_t) = F(X_0)+\int_0^t F'(\ovl X_u)\td X_u,
\end{equation}
where $\ovl X_t := \sup_{u\leq t} X_u$. Further
\begin{itemize}
\item[(i)] if $F'\geq 0$ then $\ovl{M^F_t(X)}=F(\ovl X_t)$,
\item[(ii)] if $F'>0$ then $M^K_t(M^F_t(X))=X_t$ with $K=F^{-1}$ the inverse of $F$,
\item[(iii)] if $F$ is concave then $M^F_t(X)\geq F(X_t)$,
\item[(iv)] if $(V_t)_{t\geq 0} \in \A(v_0)$ and $F(v_0)= v_0 > 0$, $F'\geq 0$, then $X_t := M^F_t(V) \in \A(v_0)$ and $X_t> w(\ovl X_t)$ where $w(x)=x-K(x)/K'(x)$, $K=F^{-1}$.
\end{itemize}
\end{prop}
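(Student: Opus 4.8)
The plan is to prove the fundamental identity in \eqref{eq:AYdef} first, and then to read off (i)--(iv). Everything rests on the fact that, by max-continuity of $X$, the running maximum $\ovl X$ is continuous and non-decreasing and the associated Stieltjes measure $\td\ovl X$ is carried by $\{t:X_t=\ovl X_t\}$; since $X$ has only countably many jump times and $\td\ovl X$ is atomless, moreover $X_{t-}=\ovl X_t$ holds $\td\ovl X$-a.e. To establish the identity I would mollify $F$: choose $C^2$ functions $F_\epsilon$ with $F_\epsilon\to F$ pointwise and $F'_\epsilon\to F'$ locally in $L^1$ and uniformly bounded on compacts. For such $F_\epsilon$, It\^o's formula gives $\td[F_\epsilon(\ovl X_t)]=F'_\epsilon(\ovl X_t)\,\td\ovl X_t$, while the product rule applied to $F'_\epsilon(\ovl X_t)\cdot(\ovl X_t-X_t)$ produces $F'_\epsilon(\ovl X_t)\,\td(\ovl X_t-X_t)$ plus the term $(\ovl X_{t-}-X_{t-})F''_\epsilon(\ovl X_t)\,\td\ovl X_t$ (the covariation term vanishing since $F'_\epsilon(\ovl X)$ is continuous of finite variation), and the latter is $0$ because $X_{t-}=\ovl X_t$ $\td\ovl X$-a.e. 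Hence $\td\big[F_\epsilon(\ovl X_t)-F'_\epsilon(\ovl X_t)(\ovl X_t-X_t)\big]=F'_\epsilon(\ovl X_t)\,\td X_t$, i.e.\ the identity holds for $F_\epsilon$; passing to the limit $\epsilon\to0$ (pointwise on the left, dominated convergence for the stochastic integral on the right) yields \eqref{eq:AYdef}. Alternatively one simply invokes \cite{CEO}.

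For (i), assume $F'\ge0$. The left-hand expression in \eqref{eq:AYdef} gives $M^F_t(X)=F(\ovl X_t)-F'(\ovl X_t)(\ovl X_t-X_t)\le F(\ovl X_t)$, and $t\mapsto F(\ovl X_t)$ is non-decreasing, so $\ovl{M^F_t(X)}\le F(\ovl X_t)$. Conversely, at any $s$ with $X_s=\ovl X_s$ one has $M^F_s(X)=F(\ovl X_s)$; since $\ovl X$ is continuous and increases only on $\{X=\ovl X\}$ we may pick $s_n\le t$ with $X_{s_n}=\ovl X_{s_n}$ and $\ovl X_{s_n}\uparrow\ovl X_t$, and continuity of $F$ gives $\ovl{M^F_t(X)}\ge\lim_n F(\ovl X_{s_n})=F(\ovl X_t)$. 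Claim (iii) is immediate from concavity: the graph of $F$ lies below its tangent at $\ovl X_t$, so $F(X_t)\le F(\ovl X_t)+F'(\ovl X_t)(X_t-\ovl X_t)=M^F_t(X)$.

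For (ii), take $F'>0$, so that $K=F^{-1}$ is strictly increasing with $K'\circ F=1/F'$, and set $Y_t:=M^F_t(X)$. By (i), $\ovl Y_t=F(\ovl X_t)$, so $K(\ovl Y_t)=\ovl X_t$ and $K'(\ovl Y_t)=1/F'(\ovl X_t)$; also $\ovl Y_t-Y_t=F'(\ovl X_t)(\ovl X_t-X_t)$ by \eqref{eq:AYdef}. Substituting into $M^K_t(Y)=K(\ovl Y_t)-K'(\ovl Y_t)(\ovl Y_t-Y_t)$ gives $M^K_t(Y)=\ovl X_t-(\ovl X_t-X_t)=X_t$. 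For (iv) (where $F'>0$ is tacitly required for $K=F^{-1}$ and $w$ to be defined): from the right-hand expression in \eqref{eq:AYdef}, $V_t=v_0+\sum_i\int_0^t\pi^i_u\,\td S^i_u$ and associativity of the stochastic integral, $X_t=M^F_t(V)=F(v_0)+\int_0^t F'(\ovl V_u)\,\td V_u=v_0+\sum_i\int_0^t F'(\ovl V_u)\pi^i_u\,\td S^i_u$, so $X$ has the form of Definition~\ref{def:wealth1} with the predictable, locally bounded integrand $F'(\ovl V)\pi$. For the bound, by (i) $\ovl X_t=F(\ovl V_t)$, hence $\ovl V_t=K(\ovl X_t)$ and $F'(\ovl V_t)=1/K'(\ovl X_t)$, so $X_t=F(\ovl V_t)-F'(\ovl V_t)(\ovl V_t-V_t)=\ovl X_t-\big(K(\ovl X_t)-V_t\big)/K'(\ovl X_t)$; since $V_t>0$ and $K'(\ovl X_t)>0$ this is strictly larger than $\ovl X_t-K(\ovl X_t)/K'(\ovl X_t)=w(\ovl X_t)$. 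The strict positivity of $X$, needed for $X\in\A(v_0)$, then follows from this bound once $w\ge0$ on the range of $\ovl X$ (equivalently $K(x)\le xK'(x)$ there).

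The only genuinely delicate point is the identity in the first step: that $\td\ovl X$ charges only $\{X=\ovl X\}$ and lives at continuity points of $X$, and that the formal computation is legitimate even though $F'(\ovl X)$ is merely a Borel function of a continuous process rather than a semimartingale --- both handled by the smooth approximation. The remaining issues are pure bookkeeping: fixing a convenient version of $F'$ at the Lebesgue-null set where $F$ fails to be differentiable (so that the limit in the first step and the identity $K'\circ F=1/F'$ hold at the relevant values $\ovl X_t$, e.g.\ using a right-continuous version), and noting that strict positivity in (iv) rests on $w\ge0$, which is where the structural normalisation of $F$ enters.
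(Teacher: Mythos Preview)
Your proposal is correct and follows the same line as the paper, which simply defers the identity \eqref{eq:AYdef} and items (i)--(iii) to \cite{CEO} and, for (iv), observes that the integral representation in \eqref{eq:AYdef} yields $X\in\A(v_0)$ with integrand $\pi^X_t=F'(\ovl V_t)\pi^V_t$. Your argument is in fact more complete: you spell out the derivation of the drawdown bound $X_t>w(\ovl X_t)$ and correctly flag that strict positivity of $X$ (and the very definition of $K=F^{-1}$ and $w$) implicitly requires $F'>0$ and $w\ge 0$ on the relevant range, points the paper leaves tacit.
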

The results follows directly from Section 2 in \cite{CEO} and only the last property requires an extra argument. Using \eqref{eq:AYdef} one obtains that $X_t\geq 0$ and it satisfies Definition \ref{def:wealth1} with
\begin{equation*}
\pi^X_t=F'(\ovl{V}_t)\pi^V_t,
\end{equation*}
where $\pi^X_t$ and $\pi^V_t$ are vectors.

\section{Optimal long-term investment subject to floor constraint} \label{sec:longterm}

We consider now the infinite horizon problem of an investor who aims to maximise her asymptotic growth rate of expected utility of terminal wealth \eqref{eq:cer_def} but is subject to a floor constraint. Our main theorem provides the connection between constrained and unconstrained problems. We are able to show that the floor constraint does not decrease the value function for a wide class of floor processes considered. More precisely, for any floor process which admits at least one wealth process dominating it, there exists an optimal solution to the unconstrained problem which dominates any given fraction of this floor process.

For a given adapted non-negative semimartingale process $(G_t)_{t\geq 0 }$ we consider a subclass of all wealth processes defined as follows
\begin{equation}
\mathcal{A}_G(v_0) := \{ \left(V_t\right) \in \A (v_0): \quad  V_t \geq G_t, \quad t \geq 0 \}. \nonumber
\end{equation}

We consider function $U$ which satisfies the following
\begin{asmp}\label{ass:U1}
Function $U$ is non-decreasing, concave and it is either strictly positive, or it is strictly negative. It satisfies
\begin{equation}
\limsup_{x \rightarrow \infty} \frac{xU'_+(x)}{\left| U(x) \right|} < \infty. \label{eq:KS_cond_on_U}
\end{equation}
\end{asmp}


We now state our main result, which generalises the observations in Sekine \cite{sek}. Below when using $\R_{U}$ from \eqref{eq:cer_def}, and throughout, we extend $\log$ to $\re\setminus \{0\}$ via $\log(x)=-\log(-x)$.

\begin{thrm} \label{thrm:main}
 Let $U$ satisfy Assumption \ref{ass:U1} and $v_0 >0 $ be an initial wealth.
Consider a floor process $(G_t)_{t \geq 0}$ such that $G_t \leq X_t \in \mathcal{A}(v_0(1-\varepsilon))$ for a wealth process $X$ and some $0 < \varepsilon < 1$.
Then
 \begin{equation}
\sup_{V \in \mathcal{A}(v_0)}\R_U(V) = \sup_{V \in \A_G(v_0)}\R_U(V)\ . \label{eq:main_thr}
 \end{equation}
Further, if the left hand side in \eqref{eq:main_thr} is finite and achieved by some wealth process $\hat{\xi}$ then we can assume that $\hat{\xi}_t \geq c > 0$  a.s.\ for some $c \in \mathbb{R}^+$ and all $t\geq 0$. The right hand side in \eqref{eq:main_thr} is then maximised by $\hat{V}_t := \varepsilon \hat{\xi}_t + X_t$.
\end{thrm}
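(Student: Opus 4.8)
The plan is to prove the two inequalities in \eqref{eq:main_thr} separately. Since $\A_G(v_0)\subseteq\A(v_0)$, the inequality ``$\geq$'' is immediate, so the whole content is to show that every $V\in\A(v_0)$ can be replaced by some $V'\in\A_G(v_0)$ with $\R_U(V')\geq\R_U(V)$. Fix once and for all a wealth process $X\geq G$ with $X\in\A(v_0(1-\varepsilon))$ (which exists by hypothesis), pick any $c\in(0,v_0)$, and set $\beta:=1-c/v_0\in(0,1)$, so that $c+\beta v_0=v_0$.

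The engine of the proof will be an elementary scaling estimate, which is where Assumption \ref{ass:U1} enters: for $\lambda\in(0,1]$ and $a\geq0$ there is $\kappa>0$ such that $U(a+\lambda x)\geq\kappa\,U(x)$ for all $x>0$ when $U>0$, and $|U(a+\lambda x)|\leq\kappa\,|U(x)|$ when $U<0$ --- in the latter case for all $x>0$ if $a>0$, or for all $x\geq c'$ if $a=0$ (any fixed $c'>0$, with $\kappa$ depending on $c'$). Granting this, if $W,W'$ are wealth processes with $W'_t\geq a+\lambda W_t$ (and $W_t\geq c'$ in the case $a=0,\,U<0$), then $\R_U(W')\geq\R_U(W)$: indeed $\e[U(W'_T)]\geq\kappa\,\e[U(W_T)]$ for $U>0$, resp.\ $\e[|U(W'_T)|]\leq\kappa\,\e[|U(W_T)|]$ for $U<0$, and in both cases --- using $\log(x)=-\log(-x)$ for $x<0$ --- one gets $\log\e[U(W'_T)]\geq\log\e[U(W_T)]-|\log\kappa|$, after which dividing by $T$ and taking $\limsup$ kills the constant. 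For $U>0$ the estimate is just concavity, $U(\lambda x)\geq\lambda U(x)+(1-\lambda)U(0^+)\geq\lambda U(x)$ (extending $U$ concavely to $0$), together with $U(a+\lambda x)\geq U(\lambda x)$. For $U<0$ I would use \eqref{eq:KS_cond_on_U}: it yields $M,x_0$ with $x\,U'_+(x)\leq M|U(x)|$ for $x\geq x_0$, hence the logarithmic derivative of $|U|$ is bounded below by $-M/x$ there, and integrating from $\lambda x$ to $x$ gives $|U(\lambda x)|\leq\lambda^{-M}|U(x)|$ for $x\geq x_0/\lambda$; on the complementary set $|U|$ is continuous and bounded away from $0$ and $\infty$, which is exactly where the hypothesis $x\geq c'$ (or $a>0$, keeping $a+\lambda x$ away from the possible pole of $|U|$ at the origin) is used. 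I expect this last point, the behaviour of a negative $U$ near $0$, to be the only genuinely delicate step.

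Given the estimate, the first inequality is short: for $V\in\A(v_0)$ with predictable integrand $\pi=\pi^V$ put $V':=\varepsilon(c+\beta V)+X$. The process $c+\beta V$ has initial value $c+\beta v_0=v_0$ and predictable integrand $\beta\pi$, so it lies in $\A(v_0)$ (it is the Az\'ema--Yor process $M^F(V)$ for the affine $F(x)=c+\beta x$, cf.\ Proposition \ref{prop:AY}(iv)); hence $V'\in\A(\varepsilon v_0+(1-\varepsilon)v_0)=\A(v_0)$, and $V'_t\geq X_t\geq G_t$ gives $V'\in\A_G(v_0)$. Since $V'_t=\varepsilon c+\varepsilon\beta V_t+X_t\geq\varepsilon c+\varepsilon\beta V_t$, the estimate with $a=\varepsilon c>0$ and $\lambda=\varepsilon\beta$ yields $\R_U(V')\geq\R_U(V)$; taking suprema over $V\in\A(v_0)$ proves \eqref{eq:main_thr}.

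For the last assertion, let the common value be finite and attained at $\hat\xi\in\A(v_0)$. Replacing $\hat\xi$ by $c+\beta\hat\xi$, which lies in $\A(v_0)$, is bounded below by $c$, and has $\R_U$-value $\geq\R_U(\hat\xi)$ and therefore equal to it by optimality, I may assume $\hat\xi_t\geq c$ for all $t$. Then $\hat V_t:=\varepsilon\hat\xi_t+X_t$ lies in $\A(v_0)$ and dominates $X\geq G$, so $\hat V\in\A_G(v_0)$; and $\hat V_t\geq\varepsilon\hat\xi_t$ with $\hat\xi_t\geq c$, so the estimate (case $a=0$, $\lambda=\varepsilon$, $c'=c$) gives $\R_U(\hat V)\geq\R_U(\hat\xi)$, which equals the common value $\sup_{\A_G(v_0)}\R_U$. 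Since $\hat V\in\A_G(v_0)$ forces $\R_U(\hat V)\leq\sup_{\A_G(v_0)}\R_U$, equality holds, i.e.\ $\hat V$ is optimal for the constrained problem.
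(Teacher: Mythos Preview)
Your proof is correct and follows essentially the same approach as the paper's: both build the constrained wealth as $\varepsilon(\text{affine shift of }V)+X$ and use a power-type scaling estimate on $U$ (which the paper cites as Lemma~A.3 of \cite{ChOb}, and which you derive directly from Assumption~\ref{ass:U1}) to show that the multiplicative constants vanish in the $\limsup$. The only cosmetic difference is that you treat the general equality \eqref{eq:main_thr} directly for an arbitrary $V\in\A(v_0)$, whereas the paper first handles the case of an achieved optimum and then remarks that the same argument applied along a maximising sequence covers the rest.
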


\begin{proof}
Suppose that the LHS in \eqref{eq:main_thr} is finite and achieved by an optimal wealth process $(\hat{\eta}_t)_{t\geq 0} \in \A(v_0)$.  For any $\delta \in (0,1)$ we consider $\hat{\xi}_t := \delta v_0 + (1-\delta )\hat{\eta}_t \geq \delta v_0$ which belongs to $\mathcal{A}(v_0)$.\footnote{Note that in fact $\hat{\xi}=M^F(\eta)$ with $F(x)=\delta v_0+(1-\delta)x$ which corresponds to $w(x)\equiv \delta v_0$ in $(iv)$ in Proposition \ref{prop:AY}.}
Using Lemma A.3 from \cite{ChOb} for function $U$ which satisfies Assumption \ref{ass:U1}, we obtain that for any $x_0 > 0$ there exists $\gamma \in \mathbb{R}$ such that
\begin{equation*}
U(x) \leq U(\lambda x) \leq \lambda ^ {\gamma} U(x) \quad \text{for any}\quad  \lambda > 1, x \geq x_0.
\end{equation*}
Applying the above with $x_0=\delta v_0$ we obtain
\begin{equation*}
U(\hat{\xi}_t) \geq (1-\delta)^{\gamma'} U\left(\frac{\delta}{1-\delta} v_0 + \hat{\eta}_t\right) \geq (1-\delta)^{\gamma'} U(\hat{\eta}_t).
\end{equation*}
Thus, $\R_U(\hat\xi) \geq \R_U(\hat{\eta})$, which concludes the optimality of $\hat{\xi}$.

Consider now the process $\hat{V}_t := \varepsilon \hat{\xi}_t + X_t$. It is a wealth process from the class $\mathcal{A}(v_0)$\footnote{This turns out to be the same strategy as used by Sekine in \cite{sek}.}.
From nonnegativity of $\hat{\xi}$ we deduce
$$\hat{V}_t \geq X_t \geq G_t.$$
Therefore, $(\hat{V}_t)_{t \geq 0} \in \A_G(v_0)$.

Now, we need to show the optimality of the wealth process $\hat{V}$. Note that if we have two wealth processes such that $\theta^1_t \geq \theta^2_t$ for all $t \geq 0$ then $\R(\theta^1) \geq \R(\theta^2)$. This is due to the fact that $U$ is non-decreasing.
And as $\hat{V}_t \geq \varepsilon \hat{\xi}_t$ we deduce that $\R(\hat{V}) \geq \R(\varepsilon \hat{\xi})$ .

As $\hat{\xi}_t \geq c$ for some $c >0$, we deduce that $U(\varepsilon \hat{\xi}_t) \geq \varepsilon ^ {\gamma} U(\hat{\xi}_t)$ again by Lemma A.3 from \cite{ChOb} with $x_0 = c \varepsilon$. Thus, $\R(\varepsilon \hat{\xi}) = \R(\hat{\xi})$ and we conclude the proof of optimality of $\hat{V}$.

The same arguments, possibly applied to a sequence of wealth processes, show that \eqref{eq:main_thr} holds in all generality.
\end{proof}

\begin{remark} \label{remark:longrun}
From the proof of Theorem \ref{thrm:main} we obtain that
 $\e U(\hat{V}_T)<\infty$ and $\R_U(\hat{V})<\infty$ if and only if the same holds for $\hat{\xi}$ and then
\begin{equation*}
\limsup_{T \rightarrow \infty} \frac{1}{T} \left( \log \e U(\hat{\xi}_T) - \log \e U(\hat{V}_T)\right) = 0 \ .
\end{equation*}
\end{remark}

\begin{proof}
The first assertion is clear by \eqref{eq:main_thr} and since $U(\hat{V}_T)\geq U(\varepsilon \hat{\xi}_T)\geq \varepsilon^\gamma U(\hat{\xi}_T)$. Further
\begin{equation*}
\limsup_{T \rightarrow \infty} \frac{1}{T} \left( \log \e U(\hat{\xi}_T) - \log \e U(\hat{V}_T)\right) \geq \limsup_{T \rightarrow \infty} \frac{1}{T} \log \e U(\hat{\xi}_T) - \limsup_{T \rightarrow \infty} \frac{1}{T} \log \e U(\hat{V}_T) = 0,
\end{equation*}
and
\begin{equation*}
\begin{split}
\limsup_{T \rightarrow \infty} \frac{1}{T} \left( \log \e U(\hat{\xi}_T) - \log \e U(\hat{V}_T)\right) &\leq \limsup_{T \rightarrow \infty} \frac{1}{T} \left( \log \e U(\hat{\xi}_T) - \log \varepsilon ^{\gamma}\e U(\hat{\xi}_T)\right) \\
&\leq \limsup_{T \rightarrow \infty} \frac{1}{T} (-\gamma \log \varepsilon)=0.
\end{split}
\end{equation*}
\end{proof}

\begin{remark}
 Combining this result and Theorem 4.2 from \cite{ChOb} we are able to connect in an explicit manner three problems: the unconstrained problem, the problem with a drawdown constraint and the problem with a floor constraint.

More precisely, for a continuous stock process $(S_t)_{t\geq 0}$ and under assumptions of Theorem 4.2 from \cite{ChOb} the solution to the drawdown constrained problem is an explicit Az{\`e}ma-Yor transformation of the solution to floor constrained problem, since the latter also solves the unconstrained problem.
\end{remark}

\section{Long-run properties} \label{sec:longrun}

We turn now to studying the behaviour of the value function of the finite horizon problem when the horizon becomes distant. As discussed in the Introduction, such topics have been studied primarily in the context of \emph{turnpike theorem}. Here we look at \emph{long-run optimality} in the sense of Guasoni and Robertson \cite{GuasoniRobertson:11}.  We establish long-run optimality results in the constrained framework, when there is either a floor constraint, or a drawdown constraint imposed on the wealth processes.

\subsection{Definition of long-run optimality}
An investment strategy is called long-run optimal if the rate of growth of the value function for finite horizon problem converges to the certainty equivalent rate (CER) in \eqref{eq:cer_def}. Specifically, let us first define classes of admissible wealth processes up to time horizon $T$.
For a given floor process $(G_t)_{0\leq t \leq T}$ we define a subclass of $\A(v_0)$ as
$$\mathcal{A}_G(T)(v_0) := \{V \in \A(v_0) \quad \text{ s.t. } \quad V_t \geq G_t, \quad 0\leq t \leq T\}.$$
For a given function $w$ we define
\begin{equation*}
\A^w(T)(v_0) := \{ V \in \A(v_0) \quad \text{ s.t. } \quad \min \{ V_{t-}, V_t\} > w(\overline{V}_t) \quad 0\leq t\leq T\},
\end{equation*}
where $\overline{V}_t := \sup_{s \leq t} V_s$.
We simply write $\A^w(v_0)$ and $\mathcal{A}_G(v_0)$ when $T = \infty$.

\begin{defn} \label{defn:long-run}
Consider $U$ satisfying Assumption \ref{ass:U1}. A wealth process $\hat{V} \in \mathcal{B}$ is called $\mathcal{B}$-\textit{long-run optimal} if
$\sup_{V \in \mathcal{B}_T} \e U(V_T)$ is finite and 
\begin{equation}
\limsup_{T \rightarrow \infty} \left( \frac{1}{T} \log \sup_{V \in \mathcal{B}_T} \e U(V_T) - \frac{1}{T} \log \e U(\hat{V}_T)\right) = 0, \label{eq:long_run_opt}
\end{equation}
where $\mathcal{B}_T = \A(v_0)$, $\A_G(T)(v_0)$ or $\A^w(T)(v_0)$, and $\mathcal{B} = \A(v_0)$, $\A_G(v_0)$ or $\A^w(v_0)$, respectively.
\end{defn}

For a financial interpretation of this definition and intuition behind we refer to \cite{GuasoniRobertson:11} where the authors define the \textit{certainty equivalent loss} $l: \left[ 0, \infty \right) \rightarrow \mathbb{R}$ of a wealth process $(V_t)_{t\geq 0}$ by
\begin{equation*}
 \e U(e^{l_T T} V_T) = \sup_{Y \in \A(v_0)} \e U(Y_T).
\end{equation*}
Similar ideas were used by Grossman and Villa \cite{GrossmanVila:92} to measure the so-called cost of myopia.
One can see that for a process $V \in \A(v_0)$ such that $V \geq c > 0$: if $l_T(V) \rightarrow 0$ as $T \rightarrow \infty$ then $V$ is $\A(v_0)$-long-run optimal. Indeed, as $V \geq c > 0$ we are able to use Lemma A.3 from \cite{ChOb} to derive for some $\gamma \in \mathbb{R}$ that $U(e^{l_T T}V_T) \leq e^{\gamma l_T T} U(V_T)$. Thus,
\begin{equation*}
0\leq \limsup_{T \rightarrow \infty} \left( \frac{1}{T} \log \sup_{Y \in \A(v_0)} \e U(Y_T) - \frac{1}{T} \log \e U(V_T)\right) \leq \limsup_{T\rightarrow \infty} \gamma l_T = 0
\end{equation*}

In other words, vanishing certainty equivalent loss is a sufficient condition for long-run optimality.

\subsection{Floor constraint}
In this subsection we construct the long-run optimal portfolio satisfying a floor constraint using the long-run optimal portfolio for an unconstrained problem. The main result for this section is the following:
\begin{prop}
Let $(\hat{X}_t)_{t\geq 0}$ be $\A(v_0)$-long-run optimal for some utility function $U$ which satisfies Assumption \ref{ass:U1}. Then, for any floor process $(G_t)_{t\geq0}$ such that  $G_t \leq X_t$ for some  $0< \varepsilon < 1$ and some $X \in \A(v_0(1-\varepsilon))$, the process $\hat{V}_t := \varepsilon \hat{X}_t + X_t$ is $\A_G(v_0)$-long-run optimal.
\end{prop}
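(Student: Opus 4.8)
The plan is to follow the template of the proof of Theorem~\ref{thrm:main}, only now keeping track of the finite-horizon value functions rather than just their exponential rates. Throughout write $u_T^{G}:=\sup_{V\in\A_G(T)(v_0)}\e U(V_T)$ and $u_T:=\sup_{V\in\A(v_0)}\e U(V_T)$.

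First I would record that $\hat V$ is admissible at every horizon. Exactly as in the proof of Theorem~\ref{thrm:main}, $\varepsilon\hat X\in\A(\varepsilon v_0)$ and $X\in\A((1-\varepsilon)v_0)$, so $\hat V=\varepsilon\hat X+X\in\A(v_0)$; and since a wealth process is strictly positive, $\hat V_t\ge X_t\ge G_t$ for all $t$, whence $\hat V\in\A_G(T)(v_0)$ for every $T$ (in particular $\hat V\in\A_G(v_0)$). Finiteness of $u_T^{G}$ then follows from $\A_G(T)(v_0)\subseteq\A(v_0)$ together with finiteness of $u_T$ (part of $\A(v_0)$-long-run optimality of $\hat X$) and the lower bound on $\e U(\hat V_T)$ produced in the last step.

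Next I would set up a sandwich. Since $\hat V\in\A_G(T)(v_0)\subseteq\A(v_0)$, the (extended) monotonicity of $\log$ gives, for every $T$,
\begin{equation*}
0\ \le\ \frac1T\log u_T^{G}-\frac1T\log\e U(\hat V_T)\ \le\ \frac1T\log u_T-\frac1T\log\e U(\hat V_T).
\end{equation*}
Writing the right-hand side as $\big(\tfrac1T\log u_T-\tfrac1T\log\e U(\hat X_T)\big)+\big(\tfrac1T\log\e U(\hat X_T)-\tfrac1T\log\e U(\hat V_T)\big)$ and using that the first bracket has vanishing $\limsup$ (long-run optimality of $\hat X$), the whole proposition reduces to the estimate
\begin{equation*}
\limsup_{T\to\infty}\frac1T\Big(\log\e U(\hat X_T)-\log\e U(\hat V_T)\Big)\ \le\ 0 ,
\end{equation*}
since then the quantity in \eqref{eq:long_run_opt} is squeezed between $0$ and something with $\limsup\le 0$.

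Finally, for this estimate I would use $\hat V_T\ge\varepsilon\hat X_T$ and monotonicity of $U$ to get $U(\hat V_T)\ge U(\varepsilon\hat X_T)$, and then compare $U(\varepsilon\hat X_T)$ with $U(\hat X_T)$: when $U>0$ directly by concavity, $U(\varepsilon x)\ge\varepsilon U(x)+(1-\varepsilon)U(0+)\ge\varepsilon U(x)$; and in general by Lemma~A.3 of \cite{ChOb}, which (as in the proof of Theorem~\ref{thrm:main}) gives $U(\varepsilon x)\ge\varepsilon^{\gamma}U(x)$ for a suitable $\gamma\in\re$ once $x$ is bounded below. Taking expectations yields $\e U(\hat V_T)\ge\varepsilon^{\gamma}\e U(\hat X_T)$, so $\log\e U(\hat X_T)-\log\e U(\hat V_T)$ is bounded uniformly in $T$ by the constant $-\gamma\log\varepsilon$ (both sign cases of $U$ are handled with the convention $\log x=-\log(-x)$), which is the displayed bound. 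The one delicate point — the main obstacle — is that Lemma~A.3 of \cite{ChOb} only controls $U$ away from the origin, so one needs $\hat X$ (equivalently $\hat V$) bounded away from $0$; I would dispose of this exactly as in Theorem~\ref{thrm:main}, by passing to $\delta v_0+(1-\delta)\hat X$, which remains $\A(v_0)$-long-run optimal (same computation as there) and is bounded below by $\delta v_0$, together with a routine passage to a minimising sequence if $u_T$ is attained only in the limit.
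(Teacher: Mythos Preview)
Your proposal is correct and follows essentially the same route as the paper. The paper's proof is simply more compressed: it invokes Remark~\ref{remark:longrun} (which already encodes the bound $U(\hat V_T)\ge U(\varepsilon\hat X_T)\ge\varepsilon^{\gamma}U(\hat X_T)$ via Lemma~A.3 of \cite{ChOb}) to pass directly from $\log\e U(\hat V_T)$ to $\log\e U(\hat X_T)$, and then uses the inclusion $\A_G(T)(v_0)\subset\A(v_0)$ together with the $\A(v_0)$-long-run optimality of $\hat X$. You unpack exactly these two ingredients by hand, and you correctly flag the one technical point the paper leaves implicit---that Lemma~A.3 requires $\hat X$ bounded away from~$0$, which is arranged by the same $\delta v_0+(1-\delta)\hat X$ modification as in Theorem~\ref{thrm:main}.
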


\begin{proof}
By Remark \ref{remark:longrun} one obtains
\begin{eqnarray*}
0 &\leq &\limsup_{T \rightarrow \infty} \left( \frac{1}{T} \log \sup_{V \in \A_G(T)(v_0)} \e U(V_T) - \frac{1}{T} \log \e U(\hat{V}_T)\right) \\ 
&\leq &\limsup_{T \rightarrow \infty} \left( \frac{1}{T} \log \sup_{V \in \A_G(T)(v_0)} \e U(V_T) - \frac{1}{T} \log \e U(\hat{X}_T)\right) \\
&\leq & \limsup_{T \rightarrow \infty} \left( \frac{1}{T} \log \sup_{V \in \A(v_0)} \e U(V_T) - \frac{1}{T} \log \e U(\hat{X}_T)\right).
\end{eqnarray*}
where we used $\A_G(T)(v_0) \subset \A(v_0)$. The right hand side is equal to zero as $\hat{X}$ is $\A(v_0)$-long-run optimal. We conclude that $\hat{V}$ is $\A_G(v_0)$-long-run optimal.
\end{proof}

\subsection{Drawdown constraint}
We now turn to the long-run optimality of the solution to the drawdown constrained problem. We assume that all $S^i$ are continuous\footnote{We restrict our attention to continuous assets to avoid notational technicalities but the results naturally hold in the setup of \emph{max-continuous} assets and wealth processes as in \cite{ChOb}.} and hence all processes in $\A(v_0)$ are also continuous. We first recall the necessary definitions following closely \cite{ChOb}.
\begin{defn}\label{def:DDfunc}
We say that $w$ is a \emph{drawdown function} if it is non-decreasing and
\begin{equation}\label{eq:w-assumption}
\exists \alpha: 0 \leq w(x)/x \leq \alpha < 1,\quad x\geq 0.
\end{equation}
\end{defn}
Define
\begin{equation}\label{eq:Kdef}
K_w(x) := v_0 \exp \left( \int_{v_0}^x \frac{1}{u - w(u)} \td u \right), x \geq v_0,
\end{equation}
which is continuous and strictly increasing and has a well defined inverse $F_w:=K_w^{-1}:[v_0, \infty)\to [v_0, \infty)$. It follows from
(ii) and (iv) in Proposition \ref{prop:AY}, or more generally from Proposition 3.2 in \cite{ChOb}, that if $X\in \A(v_0)$ then $V=M^F(X)\in \A^w(v_0)$
and $X=M^K(V)$ establishing a bijection between $\A(v_0)$ and $\A^w(v_0)$ (or $\A^w(T)(v_0)$). We shall exploit this relation on several occasions in the sequel.

\begin{asmp}\label{ass:utility}
Assume that, for some $\varepsilon>0$, $U$ satisfies either
\begin{equation*}
\frac{U(x)}{x^{\varepsilon}} \vax \infty,\quad \textrm{and $U$ is strictly positive on }(0,\infty),
\end{equation*}
or
\begin{equation*}
 U(x)x^{\varepsilon} \vax 0, \quad \textrm{and $U$ is strictly negative on }(0,\infty).
\end{equation*}
\end{asmp}

\begin{thrm} \label{thrm:long-run}
Let $w$ be a drawdown function and $U$ satisfy Assumptions \ref{ass:U1} and \ref{ass:utility}.
Assume that $\limsup_{T \rightarrow \infty} \frac{1}{T} \log \sup_{V \in \A(v_0)}\e \left(U \circ F_w\left(V_T\right)\right)^{1+\delta} < \infty$ for some $\delta > 0$. Then
\begin{itemize}
\item[(i)] If $(\hat{\xi}_t)_{t \geq 0}$ is $\A(v_0)$-long-run optimal with utility function $U \circ F_w$ then
$M^{F_w}_t(\hat{\xi})$ is $\A^w(v_0)$-long-run optimal with utility function $U$.
\item[(ii)] Suppose $(\hat{\xi}_t)_{t \geq 0}$ solves \eqref{eq:cer_def} among all $V\in \A(v_0)$ and
\begin{equation*}
\limsup_{T \rightarrow \infty} \frac{1}{T} \log \e U \circ F_w(\hat{\xi}_T) = \liminf_{T \rightarrow \infty} \frac{1}{T} \log \e U \circ F_w(\hat{\xi}_T).
\end{equation*}
Recall that, by Theorem 4.1 from \cite{ChOb}, $\hat{X}:=M^{F_w}(\hat{\xi}) \in \A^w(v_0)$ solves \eqref{eq:cer_def} among all $X\in \A^w(v_0)$. \\
If $\hat{X}$ is $\A^w(v_0)$-long-run optimal with utility function $U$ then $\hat{\xi}$ is $\A(v_0)$-long-run optimal with utility function $U \circ F_w$.
\end{itemize}
\end{thrm}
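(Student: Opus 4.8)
The plan is to transfer the long-run optimality between the unconstrained problem (with utility $U\circ F_w$) and the drawdown-constrained problem (with utility $U$) through the Az\'ema--Yor bijection $V=M^{F_w}(X)$, $X=M^{K_w}(V)$ between $\A(v_0)$ and $\A^w(v_0)$ recalled just before the statement. The central observation I would establish first is that this bijection identifies the two \emph{finite-horizon} value functions: since $M^{F_w}$ and its inverse map $\A(T)(v_0)$ onto $\A^w(T)(v_0)$ and since $U\bigl(M^{F_w}_T(\hat\xi)\bigr)$ and $U\circ F_w(\overline{\hat\xi}_T)$ are related through property (i) of Proposition~\ref{prop:AY} (namely $\overline{M^{F_w}_t(\hat\xi)}=F_w(\overline{\hat\xi}_t)$ together with concavity giving $M^{F_w}_t(\hat\xi)\leq F_w(\overline{\hat\xi}_t)$), one should get, for every $T$,
\begin{equation*}
\sup_{V\in\A^w(T)(v_0)}\e U(V_T) = \sup_{X\in\A(T)(v_0)}\e\,U\!\left(M^{F_w}_T(X)\right),
\end{equation*}
and, more delicately, that the growth rate of the right-hand side coincides with the growth rate of $\sup_{X\in\A(T)(v_0)}\e\,U\circ F_w(X_T)$. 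This last identification is where the integrability hypothesis $\limsup_T \frac1T\log\sup_{V\in\A(v_0)}\e(U\circ F_w(V_T))^{1+\delta}<\infty$ enters: it controls the gap between $U(M^{F_w}_T(X))$ and $U\circ F_w(X_T)\leq U\circ F_w(\overline X_T)$ coming from the running maximum, presumably via a Doob-type $L^{1+\delta}$ bound on $\overline X_T$ in terms of $X_T$ applied after an Az\'ema--Yor change of variables, combined with H\"older's inequality and Lemma~A.3 from \cite{ChOb} to pass the $F_w$ through the power. I would isolate this as a lemma: under the stated hypothesis,
\begin{equation*}
\limsup_{T\to\infty}\frac1T\left(\log\sup_{X\in\A(T)(v_0)}\e\,U\circ F_w(X_T)-\log\sup_{X\in\A(T)(v_0)}\e\,U\!\left(M^{F_w}_T(X)\right)\right)=0.
\end{equation*}

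For part (i), given that $\hat\xi$ is $\A(v_0)$-long-run optimal for $U\circ F_w$, I would write the nonnegative chain
\begin{equation*}
0\leq \frac1T\log\sup_{V\in\A^w(T)(v_0)}\e U(V_T)-\frac1T\log\e\,U\!\left(M^{F_w}_T(\hat\xi)\right),
\end{equation*}
bound $\log\e\,U(M^{F_w}_T(\hat\xi))$ from below by $\log\e\,U\circ F_w(\hat\xi_T)$ minus a negligible term (here $M^{F_w}_T(\hat\xi)\geq F_w(\hat\xi_T)$ by concavity, property (iii), handles the inequality in the convenient direction for $U$ non-decreasing, so actually $U(M^{F_w}_T(\hat\xi))\geq U\circ F_w(\hat\xi_T)$ directly and no error term is needed on this side), and bound $\sup_{V\in\A^w(T)(v_0)}\e U(V_T)$ from above by $\sup_{X\in\A(T)(v_0)}\e\,U\circ F_w(X_T)$ up to a vanishing-rate error via the lemma. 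The expression then collapses to $\frac1T\log\sup_{X\in\A(T)(v_0)}\e\,U\circ F_w(X_T)-\frac1T\log\e\,U\circ F_w(\hat\xi_T)$, whose $\limsup$ is zero by the assumed long-run optimality of $\hat\xi$. One also needs finiteness of $\sup_{V\in\A^w(T)(v_0)}\e U(V_T)$, which follows from the integrability hypothesis (drop the $1+\delta$) together with the lemma.

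For part (ii), the argument runs in the reverse direction. Assuming $\hat X=M^{F_w}(\hat\xi)$ is $\A^w(v_0)$-long-run optimal for $U$, I would start from
\begin{equation*}
0\leq\frac1T\log\sup_{X\in\A(T)(v_0)}\e\,U\circ F_w(X_T)-\frac1T\log\e\,U\circ F_w(\hat\xi_T),
\end{equation*}
use the lemma to replace the supremum (up to vanishing rate) by $\sup_{V\in\A^w(T)(v_0)}\e U(V_T)$, and use the two-sided control of $U(\hat X_T)=U(M^{F_w}_T(\hat\xi))$ by $U\circ F_w(\hat\xi_T)$ and $U\circ F_w(\overline{\hat\xi}_T)$ — here the extra hypothesis that the $\limsup$ and $\liminf$ of $\frac1T\log\e\,U\circ F_w(\hat\xi_T)$ agree is what lets me convert the one-sided long-run optimality statement for $\hat X$ (a $\limsup$ of a difference) into the matching statement for $\hat\xi$, because without coincidence of these limits the subtraction of a $\limsup$ is not an equality. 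The difference then reduces, using that $\hat X$ is long-run optimal, to zero. The main obstacle I anticipate is the lemma: precisely quantifying that the running-maximum overshoot $U\circ F_w(\overline X_T)/U(M^{F_w}_T(X))$ contributes zero exponential rate uniformly over $X\in\A(T)(v_0)$. This is exactly the role of the $(1+\delta)$-moment assumption, and the technical heart is a maximal inequality after the $M^{F_w}$/$M^{K_w}$ change of variables; I would handle it by writing $\overline X_T=M^{K_w}_T(\overline V)$-type relations, applying Doob's $L^{1+\delta}$ inequality to a suitable nonnegative submartingale (or to $V$ itself, which lies in $\A(v_0)$ hence can be dominated by a martingale after a deflator change if needed), and then H\"older plus Lemma~A.3 from \cite{ChOb} to absorb the $F_w$ and the power into the finite constant $\gamma$.
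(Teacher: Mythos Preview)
Your overall architecture is right: the lower bound $U\bigl(M^{F_w}_T(\hat\xi)\bigr)\geq U\circ F_w(\hat\xi_T)$ via concavity (Proposition~\ref{prop:AY}(iii)) is exactly what the paper uses, and your treatment of part~(ii) via the $\limsup=\liminf$ hypothesis matches the paper's short argument there. The gap is in your proposed ``lemma'', specifically in how you intend to control the running maximum.

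You propose to bound $\e\,U\circ F_w(\overline X_T)$ in terms of $\e\,U\circ F_w(X_T)$ via a Doob $L^{1+\delta}$ maximal inequality applied to $X$ or to a deflated version of it. This does not work in the paper's setting: elements of $\A(v_0)$ are merely positive semimartingales, with no sub- or supermartingale structure assumed under~$\p$, and no deflator is postulated. Even if a deflator $Z$ existed, $ZX$ would be a nonnegative local martingale (hence supermartingale), and Doob's $L^p$ inequality goes the wrong way for supermartingales; one cannot bound $\e[\overline X_T^p]$ by $\e[X_T^p]$. So the maximal-inequality route is unavailable in this generality.

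The paper replaces the running maximum by a \emph{pathwise} device that requires no martingale structure. It introduces an auxiliary linear drawdown $w_\varepsilon(x)=\varepsilon x$ with $K_{w_\varepsilon}(x)=v_0^{\varepsilon/(1-\varepsilon)}x^{1/(1-\varepsilon)}$, uses the Az\'ema--Yor bijection to rewrite $\sup_{Y\in\A(v_0)}\e\,U\circ F_w(\overline Y_T)$ as a supremum over $Y\in\A^{w_\varepsilon}(v_0)$, and then exploits the drawdown constraint itself, $\overline Y_T\leq Y_T/\varepsilon$, to eliminate the running maximum at the price of a power perturbation $Y_T\mapsto c_\varepsilon Y_T^{1/(1-\varepsilon)}$. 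The $(1+\delta)$-moment hypothesis is used not through H\"older/Doob but through a truncation argument (Lemma~A.2 of \cite{ChOb}): one shows that restricting to $\{Y_T\leq K^T\}$ costs nothing at the exponential rate, and on that event the extra power contributes at most $\frac{\varepsilon}{1-\varepsilon}|\gamma|\log K$ to the rate via Lemma~A.3. Sending $\varepsilon\to 0$ closes the gap. This $\varepsilon$-perturbation/truncation combination is the technical heart you are missing.
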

\begin{proof}
Recall that by Lemma A.3 in \cite{ChOb} for any $\tilde U$ which satisfies Assumption \ref{ass:U1} and any $x_0>0$ there exists 
$\gamma \in \mathbb{R}$ such that for all $y\geq x_0$ and all $c>1$:
\begin{equation}\label{eq:A3}
\tilde U(y) \leq \tilde U(cy) \leq c^{\gamma} \tilde U(y).
 \end{equation}
We will use this several times below, in particular for $\tilde U=U$ or $U\circ F_w$. This fact also implies that (see Lemma A.1 in \cite{ChOb}) we may (and will) assume that $\hat{\xi}_t\geq v_0/2$ for all $t\geq 0$.

We recall the properties of Az\'ema--Yor processes given in Proposition \ref{prop:AY} and their use to obtain a bijection between $\A^w(T)(v_0)$ and $\A(v_0)$. Also, introduce $w_{\varepsilon}(x) = \varepsilon x$, the associated $K_{w_{\varepsilon}}(x) = v^{\varepsilon/(1-\varepsilon)}_0 x^{1/(1-\varepsilon)}$ and its inverse $F_{w_\varepsilon}$. It follows that 
\begin{equation*}
\begin{split}
 \sup_{V \in \A^w(T)(v_0)} \e U(V_T) &\leq  \sup_{Y \in \A(v_0)} \e U \circ F_w(\ovl{Y}_T) = \sup_{Y\in \A^{w_\varepsilon}(v_0)}\e U\circ F_w\left(K_{w_\epsilon}(\ovl{Y}_T)\right)\\
& \leq \sup_{Y\in \A^{w_\varepsilon}(v_0):Y\geq \varepsilon v_0}\e U\circ F_w\left(K_{w_\epsilon}\left(\frac{1}{\varepsilon}Y_T\right)\right)\\
&\leq \sup_{Y\in \A(v_0):Y\geq \varepsilon v_0}\e U\circ F_w\left(c_\varepsilon Y_T^{\frac{1}{1-\varepsilon}}\right),\textrm{ where }c_\varepsilon = \left(\frac{v_0^\varepsilon}{\varepsilon}\right)^{\frac{1}{1-\varepsilon}}\\
&\leq c_\varepsilon^\gamma \sup_{Y\in \A(v_0):Y\geq \varepsilon v_0}\e U\circ F_w\left(Y_T^{\frac{1}{1-\varepsilon}}\right)
\end{split}
\end{equation*}
where we took $\varepsilon<\frac{1}{2}$ small enough so that $c_\varepsilon>1$ and applied \eqref{eq:A3} to $\tilde U=U\circ F_w$ with $x_0=\varepsilon v_0$.

On the other hand, using the property of Az\'ema--Yor processes for the concave function $F_w$, we get
\begin{equation*}
 \e U(M^{F_w}_T(\hat{\xi})) \geq  \e U \circ F_w(\hat{\xi}_T).
\end{equation*}
Combining the two inequalities we deduce that
\begin{eqnarray}
0 & \leq & \limsup_{T\rightarrow \infty} \frac{1}{T}\left[ \log \sup_{V \in \A^w(T)(v_0)} \e U(V_T) - \log  \e U(M^{F_w}_T(\hat{\xi}))\right] \label{eq:main_ineq} \\
& \leq & \limsup_{T\rightarrow \infty} \frac{1}{T}\left[ \log \sup_{Y \in \A(v_0):Y\geq \varepsilon v_0}\e U \circ F_w (Y_T) - \log  \e U \circ F_w(\hat{\xi}_T)\right] + \nonumber \\
&& +  \limsup_{T\rightarrow \infty} \frac{1}{T}\left[ \log c_\varepsilon^\gamma \sup_{Y \in \A(v_0):Y\geq \varepsilon v_0}\e U \circ F_w((Y_T)^{\frac{1}{1-\varepsilon}}) - \log \sup_{Y \in \A(v_0):Y\geq \varepsilon v_0}\e U \circ F_w (Y_T)\right],\nonumber
\end{eqnarray}
where the first $\limsup$ is zero by long-run optimality of $\hat{\xi}$ and the second one is non-negative (this can be seen using the restriction $Y\geq \varepsilon v_0$ and applying \eqref{eq:A3}). We will now argue that it is bounded by a constant times $\varepsilon/(1-\varepsilon)$ and hence goes to zero as $\varepsilon$ does.

To this end, we will show that there exists $K > 0$ such that for all $\tilde{U}$ such that $\tilde{U}(x) \leq \kappa U \circ F_w (x)^{1+\delta}$ for a constant $\kappa$ and all $x\geq 1$, where $\delta>0$ is given in the statement, we have
\begin{equation}\label{eq:def_tilde_c}
\tilde{c} :=\limsup_{T \rightarrow \infty} \frac{1}{T} \left[ \log \sup_{V \in \A(v_0):V\geq \varepsilon v_0} \e \tilde{U}(V_T) - \log \sup_{V \in \A(v_0)V\geq \varepsilon v_0} \e \tilde{U}(V_T) \mathbf{1}_{V_T \leq K^T}\right]=0
\end{equation}
is equal to zero. Indeed, let $(V^T_t)_{t\geq 0}$ achieve the supremum in $\sup_{V \in \A(v_0)} \e \tilde{U}(V_T)$ (if such process does not exist then choose such $(V^T_t)_{t\geq0}$ that $\e U(V^T_T)$ differs from the supremum by less than $1/T$), then
$$0 \leq \tilde{c} \leq \limsup_{T \rightarrow \infty} \frac{1}{T} \left[ \log \e \tilde{U}(V^T_T) - \log \e \tilde{U}(V^T_T) \mathbf{1}_{V^T_T \leq K^T}\right].$$
 The argument then follows the lines of the proof of Lemma A.2 from \cite{ChOb}, where we set $\xi_T := V^T_T$ and we set $C_G = \limsup_{T \rightarrow \infty} \frac{1}{T} \log \sup_{V}\e \left(U \circ F_w(V_T)\right)^{1+\delta}$ where $V\in \A(v_0)$ with $V\geq \varepsilon v_0$. 
 
Thus, using \eqref{eq:def_tilde_c} for $\tilde{U} = U \circ F_w(x^{\frac{1}{1-\varepsilon}})$ (by \eqref{eq:A3} and taking $\varepsilon$ small enough) and for $\tilde{U} = U \circ F_w(x)$ we continue \eqref{eq:main_ineq} to obtain
\begin{eqnarray}
0&\leq & \limsup_{T\rightarrow \infty} \frac{1}{T}\left[ \log \sup_{V \in \A^w(T)(v_0)} \e U(V_T) - \log  \e U(M^{F_w}_T(\hat{\xi}))\right] \nonumber \\
& \leq & \limsup_{T\rightarrow \infty} \frac{1}{T}\left [ \log c_\varepsilon^\gamma\sup_{Y \in \A(v_0):Y\geq \varepsilon v_0}\e U \circ F_w((Y_T)^{\frac{1}{1-\varepsilon}}) \mathbf{1}_{Y_T \leq K^T}\right. \nonumber \\
&& \left.- \log \sup_{Y \in \A(v_0): Y\geq \varepsilon v_0}\e U \circ F_w (Y_T)\mathbf{1}_{Y_T \leq K^T}\right] \nonumber \\
& \leq & \frac{\varepsilon}{1-\varepsilon} |\gamma| \log K \label{eq:long-run_dd}
\end{eqnarray}
where we used that $U \circ F_w (y^{\frac{1}{1-\varepsilon}}) \leq U\circ F_w(y (y/\varepsilon v_0)^{\varepsilon/(1-\varepsilon)})\leq (y/\varepsilon v_0)^{\gamma \varepsilon/(1-\varepsilon)}U \circ F_w (y)$ for $y \geq \varepsilon v_0$ by \eqref{eq:A3}. Finally, letting $\varepsilon \to 0$ we obtain \begin{equation*}
\limsup_{T\rightarrow \infty} \frac{1}{T}\left[ \log \sup_{V \in \A^w(T)(v_0)} \e U(V_T) - \log  \e U(M^{F_w}_T(\hat{\xi}))\right]  = 0
\end{equation*}
which establishes $(i)$. For $(ii)$ it suffices to write
\begin{eqnarray*}
0&\leq & \limsup_{T\rightarrow \infty} \frac{1}{T}\left[ \log \sup_{Y \in \A(v_0)} \e U \circ F_w(Y_T) - \log  \e U \circ F_w(\hat{\xi}_T)\right] \\
& \leq & \limsup_{T\rightarrow \infty} \frac{1}{T}\left[ \log \sup_{\zeta \in \A^w(T)(v_0)} \e U (\zeta_T) - \log  \e U (M^{F_w}_T(\hat{\xi}))\right]\\
&& + \limsup_{T\rightarrow \infty} \frac{1}{T}\left[ \log \e U (M^{F_w}_T(\hat{\xi})) - \log \e U \circ F_w (\hat{\xi}_T)\right] \\
& = & 0+ \cer^w_U(v_0) - \cer_{U\circ F_w}(v_0) = 0,
\end{eqnarray*}
where in last equation we used the fact that $\limsup (A - B) \leq \limsup A - \liminf B = \limsup A - \limsup B$ when $\limsup B = \liminf B$.

\end{proof}
\subsection{Example: complete market model with deterministic coefficients}
We consider now the classical complete financial market model with deterministic coefficients. $W_t=(W^1_t,\ldots,W^d_t)\tr$ is a standard $d$-dimensional Brownian motion and $(\F_t)_{t \geq 0}$ is the right-continuous augmentation of its natural filtration. Here $\tr$ denotes vector transpose. Each asset follows dynamics given by
$$\frac{\td S^i_t}{S^i_t}= \mu^i_t\td t + \sum_{j=1}^d \sigma^{ij}_t\td W^j_t,\quad S^i_0=s^i_0>0$$
where $\mu^i_t$ and $\sigma^{ij}_t$ are bounded deterministic functions and $\sigma_t$ is invertible. Recall Definition \ref{def:wealth1} of wealth process and let $\tilde \pi_t^i:= \pi^i_t S^i_t/V_t$ be the proportion of wealth invested in the $i^{\textrm{th}}$ asset so that $\td V_t= \sum_{i=1}^d \tilde\pi^i_t V_t \frac{\td (S^i_t)}{S_t^i}$.
The market price of risk is given as $\theta_t:= \sigma^{-1}\mu_t$. We assume $\theta_t$ is also bounded and that
$$||\theta^*||^2:= \lim_{T\to\infty}\frac{1}{T}\log \int_0^T ||\theta_u||^2\td u\quad \textrm{ is well defined and finite.}$$

We consider the problem of maximising the expected utility of discounted wealth at a given horizon $T$. The solution is obtained using, by now standard, convex duality arguments, see Karatzas, Lehoczky and Shreve \cite{KaLeSh} or Karatzas and Shreve \cite[pp.~97--118]{KaSh}. It involves the state price density
$$Z_t:= \exp\left\{-\int_0^t \theta_u\tr \td W_u - \frac{1}{2}\int_0^t ||\theta_u||^2 \td u\right\},\quad t\geq 0,$$
which is a $\p$--martingale and defines the unique risk neutral measure $\Q$ on $\F_T$ via $\frac{\td \Q}{\td \p}|_{\F_T}=Z_T$.
The value function for the utility function $U_p(x) = \frac{1}{p}x^p$ equals
\begin{eqnarray*}
 V(v_0,T,p) &=& \sup_{V \in \A(v_0)} \e U_p(V_T) = U_p(v_0) \left( \e Z^{-\frac{p}{1-p}}_T\right)^{1-p} \\
&=& U_p(v_0) \exp \left\{\frac{p}{2(1-p)} \int_0^T ||\theta_u||^2 du\right\}.
\end{eqnarray*}
Moreover, the optimal wealth process $V^*$ which is characterised via $\tilde{\pi}^*_t = \frac{1}{1-p} \theta'_t \sigma^{-1}_t$ is independent of the horizon $T$ and, therefore, is long-run optimal for the unconstrained problem.

Considering the linear drawdown constraint, $w(x)  = \alpha x$, we check that the assumptions of the first part in Theorem \ref{thrm:long-run} hold  and hence we deduce long-run optimality of the wealth process $X_t=M^{F_w}_t(V^*)$ for $w$-drawdown constrained problems, since $V^*$ solves the unconstrained problem. Moreover, the following asymptotics of the finite horizon problem with drawdown constraint is obtained using Theorem \ref{thrm:long-run}
$$\log \sup_{V \in \A^w(T)(v_0)} \e U_p(V_T) = T\left(\frac{|p|(1-\alpha)}{2(1-p(1-\alpha))}||\theta^*||^2 + o(1)\right)$$

This asymptotic result can be sharpened as follows:
\begin{prop}
$$\log \sup_{V \in \A^w(T)(v_0)} \e U_p(V_T) = \frac{|p|(1-\alpha)}{2(1-p(1-\alpha))}\int_0^T||\theta_t||^2dt + O\left(\log T\right).$$
\end{prop}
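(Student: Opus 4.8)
The plan is to reduce the constrained finite--horizon power problem, via the Az\'ema--Yor bijection between $\A^w(T)(v_0)$ and $\A(v_0)$, to two--sided comparisons with the explicit unconstrained power value function of the complete--market example. For $w(x)=\alpha x$ one has $K_w(x)=v_0^{-\alpha/(1-\alpha)}x^{1/(1-\alpha)}$ and $F_w(y)=K_w^{-1}(y)=v_0^\alpha y^{1-\alpha}$; the latter is concave, non--negative, with $F_w(v_0)=v_0$ and $F_w'\ge 0$, and crucially $U_p\circ F_w=(1-\alpha)v_0^{\alpha p}\,U_{p'}$ with $p':=(1-\alpha)p$, which lies in $(0,1)$ when $p\in(0,1)$ and in $(-\infty,0)$ when $p<0$. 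Write $\xi^*$ for the horizon--independent optimiser of the unconstrained problem with utility $U_{p'}$ (so $\tilde\pi^*=\tfrac{1}{1-p'}\theta'\sigma^{-1}$), for which $\e U_{p'}(\xi^*_T)=V(v_0,T,p')$ at every horizon $T$; recall from the example that $V(v_0,T,p')=U_{p'}(v_0)\exp\{\tfrac{p'}{2(1-p')}\int_0^T\|\theta_t\|^2\td t\}$ and that, quite generally, $\sup_{Y\in\A(v_0)}\e Y_T^{q}=v_0^q\exp\{\tfrac{q}{2(1-q)}\int_0^T\|\theta_t\|^2\td t\}$ for $q\in(0,1)$, the same expression being $\inf_{Y\in\A(v_0)}\e Y_T^{q}$ for $q<0$.

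For the lower bound take $\hat V:=M^{F_w}(\xi^*)$, which lies in $\A^w(v_0)\subset\A^w(T)(v_0)$ by the bijection. Concavity of $F_w$ and Proposition~\ref{prop:AY}(iii) give $\hat V_T=M^{F_w}_T(\xi^*)\ge F_w(\xi^*_T)$, and since $U_p$ is non--decreasing,
\[
\sup_{V\in\A^w(T)(v_0)}\e U_p(V_T)\ \ge\ \e U_p\big(F_w(\xi^*_T)\big)\ =\ (1-\alpha)v_0^{\alpha p}\,\e U_{p'}(\xi^*_T)\ =\ (1-\alpha)v_0^{\alpha p}\,V(v_0,T,p').
\]
Taking logarithms (the extended $\log$ is monotone on each of $(0,\infty)$ and $(-\infty,0)$, and $\e U_p$ has a fixed sign), using $\alpha p+p'=p$, and reading off $\log V(v_0,T,p')$ from the formula above, this yields $\log\sup_{V\in\A^w(T)(v_0)}\e U_p(V_T)\ \ge\ \tfrac{|p|(1-\alpha)}{2(1-p(1-\alpha))}\int_0^T\|\theta_t\|^2\td t-O(1)$.

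For the upper bound I would re--run the inequality chain in the proof of Theorem~\ref{thrm:long-run} with $U=U_p$, exploiting that $U_p\circ F_w$ is an \emph{exact} power: the comparison \eqref{eq:A3} then holds for $\tilde U=U_p\circ F_w$ with $\gamma=p'$ (indeed with equality), so that for $\varepsilon$ small enough that $c_\varepsilon=(v_0^\varepsilon/\varepsilon)^{1/(1-\varepsilon)}>1$ and $q:=p'/(1-\varepsilon)<1$,
\[
\sup_{V\in\A^w(T)(v_0)}\e U_p(V_T)\ \le\ c_\varepsilon^{\,p'}\sup_{Y\in\A(v_0),\ Y\ge\varepsilon v_0}\e\,(U_p\circ F_w)\big(Y_T^{1/(1-\varepsilon)}\big).
\]
Since $(U_p\circ F_w)(Y_T^{1/(1-\varepsilon)})=\tfrac1p v_0^{\alpha p}Y_T^{\,q}$, the right--hand side is $c_\varepsilon^{\,p'}\tfrac1p v_0^{\alpha p}$ times $\sup_{Y\ge\varepsilon v_0}\e Y_T^q$ when $p>0$ and times $\inf_{Y\ge\varepsilon v_0}\e Y_T^q$ when $p<0$; in either case the restriction $Y\ge\varepsilon v_0$ may be dropped (it enlarges the sup if $p>0$; when $p<0$ it raises the inf, and the coefficient $\tfrac1p v_0^{\alpha p}<0$ turns this into a valid upper bound), and no truncation of $Y_T$ is needed here, since the relevant unconstrained quantity is the explicit power value function rather than a general utility functional. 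Replacing it by $v_0^q\exp\{\tfrac{q}{2(1-q)}\int_0^T\|\theta\|^2\td t\}$, taking logarithms, and specialising $\varepsilon=\varepsilon_T:=(\log T)/T$ balances the two error sources: $p'\log c_{\varepsilon_T}=O(\log T)$ since $\log c_{\varepsilon_T}\sim\log(1/\varepsilon_T)$, while $\big(\tfrac{q}{2(1-q)}-\tfrac{p'}{2(1-p')}\big)\int_0^T\|\theta\|^2\td t=O(\varepsilon_T)\cdot O(T)=O(\log T)$ because $\theta$ is bounded, and the prefactor change $v_0^{\alpha p+q-p}=v_0^{p'\varepsilon_T/(1-\varepsilon_T)}=e^{O(\varepsilon_T)}$ is harmless. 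Hence $\log\sup_{V\in\A^w(T)(v_0)}\e U_p(V_T)\le\tfrac{|p|(1-\alpha)}{2(1-p(1-\alpha))}\int_0^T\|\theta_t\|^2\td t+O(\log T)$, and combining with the lower bound proves the claim.

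The algebra of $F_w,K_w$ and of the value function is routine. The point that requires real care is the bookkeeping in the case $p<0$: the utility, the value function $V(v_0,T,p')$ and the running suprema are all negative, so at each step of the chain one must track the order--reversing $\log$ convention and the negative multiplicative constants, and verify that the restriction $Y\ge\varepsilon v_0$ is discarded in the favourable direction. The other delicate point is the rate at which $\varepsilon_T\to0$: it must be slow enough to keep $\log c_{\varepsilon_T}=O(\log T)$ yet fast enough to keep $\varepsilon_T\int_0^T\|\theta\|^2=O(\log T)$, and $\varepsilon_T\asymp(\log T)/T$ is the natural choice. (For $p>0$ one could alternatively obtain the upper bound directly, bounding $\e_\p\overline X_T^{p'}$ for $X=M^{K_w}(V)\in\A(v_0)$ by Doob's maximal inequality for the $\Q$--supermartingale $X$ combined with a H\"older split whose exponent is taken within $O((\log T)/T)$ of the critical value $1/(1-p')$; but this route breaks down for $p<0$, whereas the chain above treats both signs at once.)
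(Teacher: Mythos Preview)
Your proof is correct and follows essentially the same strategy as the paper's: the lower bound via $M^{F_w}(\xi^*)$ and concavity of $F_w$ is identical, and the upper bound via the $\varepsilon$--perturbation (passing through the auxiliary drawdown $w_\varepsilon(x)=\varepsilon x$) is the same mechanism, just phrased by invoking the chain from Theorem~\ref{thrm:long-run} rather than re-deriving it in the concrete power setting. The only notable difference is your choice $\varepsilon_T=(\log T)/T$: the paper takes the simpler $\varepsilon=1/T$, which already gives $\log c_\varepsilon=O(\log T)$ for the first error term and $O(\varepsilon\cdot T)=O(1)$ for the second, so your balancing argument is slightly over-engineered but harmless.
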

\begin{proof}
With no loss of generality we put $v_0 = 1$. By property of the Az\'ema--Yor processes for a concave function $F_w$, we obtain
\begin{equation*}
\sup_{V \in \A^w(T)(1)} \e U_p(V_T) \geq \sup_{X \in \A(1)} \e U_p \circ F_w(X_T) = (1-\alpha) V(1,T,p(1-\alpha)).
\end{equation*}
To obtain a reverse inequality we need to perturb the drawdown constraint. We write $F_\alpha=F_w$ and for a small $\varepsilon>0$ consider $K_{\varepsilon}(x)= x^{\frac{1}{1-\varepsilon}}$ and its inverse $F_{\varepsilon}(y) = y^{1-\varepsilon}$ which correspond to the drawdown function $w_{\varepsilon}(x) = \varepsilon x$. Naturally, since $\A^{w_{\varepsilon}}(T)(1) \subseteq \A(1)$, we have
\begin{equation*}
\sup_{X \in \A(1)} \e U_p \circ F_\alpha \circ K_{\varepsilon}(X_T) \geq \sup_{X \in \A^{w_{\varepsilon}}(T)(1)} \e U_p \circ F_\alpha \circ K_{\varepsilon}(X_T), \label{eq:dd_finite}
\end{equation*}
and the right hand side can be rewritten as
\begin{equation*}
\sup_{X \in \A^{w_{\varepsilon}}(T)(1)} \e U_p \circ F_\alpha \circ K_{\varepsilon}(X_T) = \sup_{Y \in \A(1)} \e U_p \circ F_\alpha \circ K_{\varepsilon}(M^{F_{\varepsilon}}_T(Y)).
\end{equation*}
Using the drawdown constraint property of $M^{F_{\varepsilon}}_T(Y)$ we deduce
\begin{eqnarray*}
\sup_{X \in \A(1)} \e U_p \circ F_\alpha \circ K_{\varepsilon}(X_T) &\geq & \varepsilon^{\frac{p(1-\alpha)}{1-\varepsilon}} \sup_{Y \in \A(1)} \e U_p \circ F_\alpha (\overline{Y}_T)  \\
& \geq & \varepsilon^{\frac{p(1-\alpha)}{1-\varepsilon}} \sup_{V \in \A^w(T)(1)} \e U_p(V_T) .
\end{eqnarray*}

Thus, we obtain inequality
\begin{equation*}
(1-\alpha) V\left(1,T,p(1-\alpha)\right) \leq \sup_{V \in \A^w(T)(1)} \e U_p(V_T) \leq \frac{1-\alpha}{1-\varepsilon} \varepsilon^{-\frac{p(1-\alpha)}{1-\varepsilon}}V\left(1,T, \frac{p(1-\alpha)}{1-\varepsilon}\right)
\end{equation*}

Taking logarithm we obtain
\begin{eqnarray*}
&& \log \frac{1}{p} + \frac{|p|(1-\alpha)}{2(1-p(1-\alpha))}\int_0^T||\theta_t||^2dt \leq \log \sup_{V \in \A^w(T)(1)} \e U_p(V_T) \\
&& \leq \log \frac{1}{p} + \frac{|p|(1-\alpha)}{2(1-p(1-\alpha) - \varepsilon)}\int_0^T||\theta_t||^2dt - \frac{p(1-\alpha)}{1-\varepsilon}\log \varepsilon.
\end{eqnarray*}
Now, taking $\varepsilon = \frac{1}{T}$ we obtain the required asymptotics.

\end{proof}

\bibliographystyle{acm}
\bibliography{bibliography}
\end{document}